\documentclass[
preprint, aps,pra,showpacs,amssymb,amsmath]{revtex4-1}
\usepackage{amsthm,graphicx}

\newcommand{\bi}{\begin{itemize}}
\newcommand{\ei}{\end{itemize}}
\newcommand{\be}{\begin{enumerate}}
\newcommand{\ee}{\end{enumerate}}
\newcommand{\bd}{\begin{description}}
\newcommand{\ed}{\end{description}}
\newcommand{\al}[1]{\begin{align} #1 \end{align}}


\theoremstyle{plain}
\newtheorem{lem}{Lemma}
\newtheorem{thm}[lem]{Theorem}
\newtheorem{mthm}{Main Theorem}

\newcommand{\ket}[1]{\ensuremath{| #1 \rangle}}

\newcommand{\ip}[2]{\ensuremath{\langle #1 | #2\rangle}}
\newcommand{\ceil}[1]{\ensuremath{\lceil #1 \rceil}}
\newcommand{\floor}[1]{\ensuremath{\lfloor #1 \rfloor}}

\newcommand{\dv}{\mid}

\begin{document}

\title{Exactness of the Original Grover Search Algorithm}

\author{Zijian Diao}
\affiliation{
Mathematics Department, Ohio University Eastern Campus, St
Clairsville, OH 43950, USA}
\email{diao@ohio.edu}

\begin{abstract}

It is well-known that when searching one out of four, the original
Grover's search algorithm is exact; that is, it succeeds with
certainty. It is natural to ask the inverse question: If we are not
searching one out of four, is Grover's algorithm definitely not
exact? In this article we give a complete answer to this question
through some rationality results of trigonometric functions.
\end{abstract}

\pacs{03.67.Hk}

\maketitle

\section{Introduction}\label{sec1}

Grover's algorithm \cite{Gro97} is one of the most significant
quantum algorithms \cite{Amb06}.  It provides a quadratic speedup
for the unsorted database search problem by amplifying the
probability amplitude of the search target.  When it was first
discovered, like most quantum algorithms, it was a probabilistic
algorithm; that is, it may fail with certain (albeit small)
probability. Currently, several schemes have been proposed to make
this algorithm exact, either by fine-tuning the amplitude
amplification operator \cite{BHT98, CK98, Hoy00, Lon01} or by
dynamical modification of the oracle function encoding the database
\cite{CD00}.  The study of exact quantum algorithms bears importance
in both practical applications and theoretical research of quantum
information science.

It is straightforward to verify that, when searching one target out
of a database of four entries, the original Grover's algorithm is
exact; that it, it succeeds with certainty.  Is this the only case
of exactness, excluding the trivial search of a database full of
search targets?  We provide a rigorous analysis to confirm this
conjecture in this article.  Reference~\cite{Hoy00} derives an
elegant phase condition for the amplitude amplification operator,
which is sufficient to ensure search with certainty. Unfortunately,
the phase shift $\pi$ in the original Grover's algorithm is exactly
what is ruled out in the assumption of this condition (cf.
\cite[Theorem~1]{Hoy00}). So the discussion there cannot be readily
applied.  Furthermore, our emphasis here deals with the opposite
direction to that used in \cite{Hoy00}. We fix the phase shift
($\pi$) first, then analyze whether the search is exact, under
varying initial success probability.

In the following sections we limit our discussion mostly to the
original Grover's algorithm, which searches for a single target. It
can be generalized in a straightforward fashion to the
multiple-target case \cite{CFC02, CS02} with the same essential
ingredients.  Similar arguments apply with minimal modification.

\section{Original Grover's Algorithm}

In this section we briefly review the procedure of the original
Grover's algorithm.  The problem dealt by the original Grover's
algorithm is as follows: Given an unsorted database containing $N$
items, $N\ge1$, how does one locate one particular target item?
Mathematically, the database is represented by an oracle function
$f(x)$, with $x\in
\{1, 2,
\dots, N\}$, defined by
\al{ f(x) = \begin{cases}
0 & \text{ if } x\ne w  \\
1 & \text{ if }x=w
\end{cases},}
where $w$ is the search target.  Grover's algorithm utilizes the
amplitude amplification operator $\mathcal{G}=
\mathcal{I}_s\mathcal{I}$, defined by
\begin{align}
  \mathcal{I}\ket{x} = (-1)^{f(x)}\ket{x},
\end{align}
or, equivalently, \al{\mathcal{I}= \mathbb{I}-2|{w}\rangle
\langle{w}|,} and  \al{
\mathcal{I}_s = 2|s\rangle \langle s|-\mathbb{I},}
where $|s\rangle =\frac{1}{\sqrt{N}}\Big(\sum_{x=1}^{N}|x\rangle
\Big)$, the uniform superposition (the average) of all possible
basis states, and $\mathbb{I}$ is the identity operator.
$\mathcal{I}$ is the selective sign-flipping operator, which
selectively flips the sign of the target state $\ket{w}$.
$\mathcal{I}_s$ is the inversion around the average operator, which
reflects a given state vector around $\ket{s}$.

The procedure of Grover's algorithm is as follows:
\begin{itemize}
\item[(1)]  prepare the initial state vector $|s\rangle$;

\item[(2)]  apply $\mathcal{G}$ on $|s\rangle $ for an appropriate
number of times (approximately $\frac{\pi}{4} \sqrt{N}$ times if $N$
is very large);

\item[(3)] measure the final state, which yields the target state
$\ket{w}$ with high probability.
\end{itemize}

The effect of the amplitude amplification operator, $\mathcal{G}$,
and why this algorithm works, can be best explained by a geometric
visualization (see Fig.~\ref{fig1}) on the plane spanned by
$\ket{s}$ and $\ket{w}$. When applied to a state vector $\ket{v}$,
the selective sign-flipping operator $\mathcal{I}$ flips the sign of
the component of $\ket{v}$ in the direction of $\ket{w}$, but leaves
all other components unchanged. So the pure effect is a reflection
of $\ket{v}$ about $\ket{w^\perp}$, the orthogonal vector to
$\ket{w}$. When applied to a state vector $\ket{v}$, the inversion
around the average operator $\mathcal{I}_s$ leaves the component in
the direction of $\ket{s}$ unchanged, but flips the signs of all the
other components. So the pure effect is a reflection of $\ket{v}$
about $\ket{s}$.  If we start from $\ket{s}$, one application of
$\mathcal{G}= \mathcal{I}_s\mathcal{I}$ reflects $\ket{s}$ first
about $\ket{w^\perp}$ and then about $\ket{s}$, hence rotates
$\ket{s}$ toward $\ket{w}$ by an angle of $2\theta$, where $\theta$
is the initial angle between $\ket{s}$ and $\ket{w^\perp}$ with
$\sin\theta = \cos(\frac{\pi}{2}-\theta)= \ip{s}{w} =
\frac{1}{\sqrt{N}}$.

\begin{figure}[htb]
\centering
\includegraphics{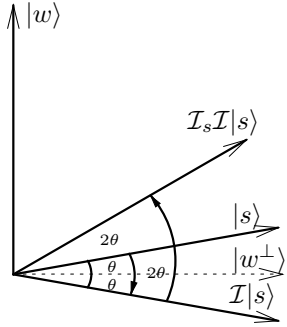}
\caption{Geometric visualization of Grover's algorithm.}
\label{fig1}
\end{figure}

It can be explicitly computed \cite[p.~252]{NC00} that, after $n$
iterations,
\al{\mathcal{G}^n\ket{s} = \sin((2n+1)\theta) \ket{w} +
\cos((2n+1)\theta) \ket{w^\perp}.}
So the success probability $p_n$ is $\sin^2((2n+1)\theta)$. When
$n=\frac{\pi}{4\theta}-\frac{1}{2}$, $(2n+1)\theta=\frac{\pi}{2}$,
and $p_n=1$. A measurement after $n$ steps yields $\ket{w}$ with
certainty.  However, $\frac{\pi}{4\theta}-\frac{1}{2}$ is not
necessarily an integer, so the optimal strategy is choosing $n$ to
be either $\ceil{\frac{\pi}{4\theta}-\frac{1}{2}}$ or
$\floor{\frac{\pi}{4\theta}-\frac{1}{2}}$ such that $(2n+1)\theta$
is the closest to $\frac{\pi}{2}$ in order to maximize $p_n$.  The
consequence is that $p_n$ is close, but not equal, to $1$, which
explains the probabilistic nature of the algorithm.

\section{Exactness of the Original Grover's Algorithm}

In this section we fully resolve the exactness of the original
Grover's algorithm.  Let us start from the special case of searching
one out of four.  Now $\sin\theta=\frac{1}{2}$, $\theta =
\frac{\pi}{6}$.  After one iteration, $p_1= \sin(3\theta)=
\sin\frac{\pi}{2}=1$.  We can find the target with certainty after
one oracle query (cf. Fig.~\ref{fig2}).  It is obvious that in order
for the algorithm to be exact, it is necessary for $\theta$ to be a
rational multiple of $\pi$.

\begin{figure}[htb]

\centering
\includegraphics{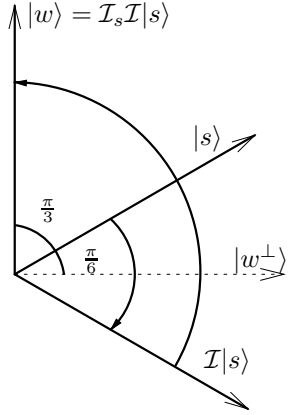}
\caption{Geometric visualization when searching one out of four.}
\label{fig2}

\end{figure}

The analysis in the rest of this section is motivated by
\cite[Chapter 4]{Rot98} and follows the same line of presentation.
Let us start from a basic result about the rational roots of
polynomials, adapted from \cite[Proposition 11, pp.~308]{DF04}.
First we define a polynomial to be monic if its leading coefficient
is $1$.
\begin{lem}\label{thm1}
Let $f(x)= x^n + a_{n-1}x^{n-1}+\dots + a_1 x + a_0$ be a monic
polynomial with integer coefficients. Then every rational root of
$f(x)$ is an integer.
\end{lem}
\begin{proof}
Suppose $x=\frac{A}{B}$, with $A$ and $B$ being relative prime and
$B>0$, is a rational root of $f(x)$.  Thus,
\al{\frac{A^n}{B^n}+a_{n-1}\frac{A^{n-1}}{B^{n-1}} + \dots +
a_1 \frac{A}{B} + a_0 &=0\\
A^n+ a_{n-1}A^{n-1}B + \dots + a_1AB^{n-1}+ a_0 B^n&=0 \\
B(a_{n-1}A^{n-1} + \dots + a_1AB^{n-2}+ a_0B^{n-1})&= -A^n
\label{eq1}}
From \eqref{eq1}, we have $B \dv A^n$, but $A$ and $B$ are
relatively prime, so $B=1$.  Therefore, $x=A$ is an integer.
\end{proof}

The following rationality result of trigonometric functions is
adapted from \cite{NZ72}.  We use $\mathbb{Q}$ to denote the set of
rational numbers.

\begin{lem} \label{lem1}
There exists a sequence of monic polynomials $f_n$ with integer
coefficients such that $f_n(2\cos\phi)=2\cos(n\phi)$, for all $n=1$,
$2$, \dots .
\end{lem}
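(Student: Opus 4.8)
The plan is to construct the polynomials $f_n$ explicitly by induction using the Chebyshev-type recurrence that arises from the trigonometric product-to-sum identity. The key observation is that $2\cos((n+1)\phi) + 2\cos((n-1)\phi) = 2\cos(n\phi)\cdot 2\cos\phi$, which rearranges to $2\cos((n+1)\phi) = (2\cos\phi)\cdot 2\cos(n\phi) - 2\cos((n-1)\phi)$. This suggests defining $f_n$ by the recurrence $f_{n+1}(x) = x\,f_n(x) - f_{n-1}(x)$, so that if $f_n(2\cos\phi) = 2\cos(n\phi)$ and $f_{n-1}(2\cos\phi) = 2\cos((n-1)\phi)$ hold, the identity propagates to $n+1$.

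First I would establish the base cases. Setting $f_1(x) = x$ gives $f_1(2\cos\phi) = 2\cos\phi = 2\cos(1\cdot\phi)$, as required. For the recurrence to generate $f_2$, I need a value for $f_0$; taking $f_0(x) = 2$ (the constant polynomial) yields $f_0(2\cos\phi) = 2 = 2\cos(0\cdot\phi)$, which is the correct convention. Then $f_2(x) = x\cdot f_1(x) - f_0(x) = x^2 - 2$, and indeed $f_2(2\cos\phi) = 4\cos^2\phi - 2 = 2(2\cos^2\phi - 1) = 2\cos(2\phi)$ by the double-angle formula, confirming the construction at the first nontrivial step.

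Next I would carry out the induction proper. Assuming that $f_{n-1}$ and $f_n$ are monic polynomials with integer coefficients satisfying the defining property, the recurrence $f_{n+1}(x) = x\,f_n(x) - f_{n-1}(x)$ manifestly produces another polynomial with integer coefficients, since multiplication by $x$ and subtraction preserve integrality. Monicity is likewise preserved: if $f_n$ has degree $n$ with leading coefficient $1$, then $x\,f_n(x)$ has degree $n+1$ with leading coefficient $1$, and subtracting $f_{n-1}$ (of strictly lower degree $n-1$) does not disturb the top term. Evaluating at $x = 2\cos\phi$ and invoking the trigonometric identity above then gives $f_{n+1}(2\cos\phi) = 2\cos((n+1)\phi)$, closing the induction.

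The main obstacle here is not any single difficult step but rather the bookkeeping needed to track two properties simultaneously—integer coefficients and monicity of degree exactly $n$—through a two-term recurrence. One must verify that the degree strictly increases by one at each stage so that the leading coefficient is never cancelled, and one must be careful that the induction uses \emph{two} predecessors, requiring both base cases $f_0$ and $f_1$ to be in hand before the recurrence can fire. Once these are set up correctly, everything else reduces to the elementary product-to-sum identity for cosines, which is the engine driving the whole argument.
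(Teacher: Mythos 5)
Your proposal is correct and follows essentially the same route as the paper: the same recurrence $f_{n+1}(x)=x\,f_n(x)-f_{n-1}(x)$ with base cases $f_0(x)=2$, $f_1(x)=x$, driven by the same product-to-sum identity. Your treatment is slightly more explicit about tracking degrees and monicity through the two-term recurrence, but the argument is the same.
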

\begin{proof} Let's construct this sequence of polynomials inductively
by $f_0(x) = 2$, $f_1(x) =x$, and $f_n(x)= xf_{n-1}(x)-f_{n-2}(x)$.
Clearly all $f_n$'s except $f_0$ are monic and all their
coefficients are integers. Also, $f_0$ and $f_1$ satisfy the cosine
property. Assume that $f_n(2\cos\phi)=2\cos(n\phi)$ for all indices
up to $n$. It is easy to verify that $f_{n+1}(2\cos\phi)=2\cos\phi\,
f_n(2\cos\phi)-f_{n-1}(2\cos\phi)=4\cos\phi\cos(n\phi)-2\cos((n-1)\phi)
=2\cos((n+1)\phi)$, which completes the induction proof.
\end{proof}

\begin{thm}
The only rational values for $\cos(r\pi)$ with $r\in \mathbb{Q}$ are
$0$, $\pm \frac{1}{2}$, and $\pm 1$.
\end{thm}
\begin{proof}
If $r\in \mathbb{Q}$, there exists a non-negative integer $n$ such
that $nr$ is an integer.  Let $f_n$ be the polynomial constructed in
Lemma~\ref{lem1}.  $f_n(2\cos(r\pi)) = 2\cos(nr\pi)= \pm 2$, so
$2\cos(r\pi)$ is a root of the polynomial $f_n(x)\pm 2$.
Lemma~\ref{thm1} tells us that if $2\cos(r\pi)$ is a rational
number, then $2\cos(r\pi)$ has to be an integer, that is, $0$, $\pm
1$, or $\pm 2$. Hence, the only rational values of $\cos(r\pi)$ are
$0$, $\pm\frac{1}{2}$, and $\pm 1$.
\end{proof}

Now we are in the position to prove our main result.
\begin{mthm}
Excluding the trivial search of a database full of search targets,
the original Grover's algorithm is exact if and only if searching
one out of four.
\end{mthm}
\begin{proof}
In order to succeed with certainty after a number of iterations, the
geometric interpretation of Grover's algorithm imposes the
restriction that the angle $\theta$ must be a rational multiple of
$\pi$, that is, of the form $r\pi$, where $r\in \mathbb{Q}$. On the
other hand, $\sin^2\theta = \frac{1}{N}$ ($\frac{t}{N}$ in the
multiple-target case, where $t$ is the number of targets) is a
rational number, and so is $\cos(2\theta) = 1-2\sin^2\theta=
1-\frac{2}{N}$ ($1-\frac{2t}{N}$ in the multiple-target case).
However, the only possible rational values of $\cos(2\theta)$ are
$0$, $\pm\frac{1}{2}$, and $\pm 1$, when $\theta=r\pi$, $r\in
\mathbb{Q}$. Let us analyze these five values one by one.

\be
\item When $\cos(2\theta)=1$, $\sin^2\theta=0$.  This is the trivial
search for a nonexisting target.

\item When $\cos(2\theta)=-1$, $\sin^2\theta=1$.  This is the
trivial search of a database where all the entries are targets.

\item When $\cos(2\theta)= 0$, $\sin^2\theta=\frac{1}{2}$, and
$\theta=\frac{\pi}{4}$.  The success probability after $n$ iteration
is $\sin^2((2n+1)\theta) = \sin^2\frac{(2n+1)\pi}{4}=\frac{1}{2}$,
which is never $1$.

\item When $\cos(2\theta)= -\frac{1}{2}$,
$\sin^2\theta=\frac{3}{4}$, and $\theta=\frac{\pi}{3}$.  The success
probability after $n$ iteration is $\sin^2(2n+1)\theta =
\sin^2\frac{(2n+1)\pi}{3}$, which is never $1$ ($0$ if
$3\dv 2n+1$ and $\frac{3}{4}$ if $3\nmid 2n+1$). \label{case3outof4}

\item When $\cos(2\theta)= \frac{1}{2}$, $\sin^2\theta=\frac{1}{4}$,
so $\theta=\frac{\pi}{6}$.  This is the familiar case of searching
one out of four.  One iteration yields the search target with
certainty.
\ee
Out of these, the exactness result in this theorem follows
naturally.
\end{proof}

As the final remark, if post-measurement processing is allowed,
there is one more special case where exactness can be achieved. When
there are three search targets in a database with four entries, the
success probability is $0$ after one iteration
(cf.~Case~\ref{case3outof4} in the proof of Main Theorem~1 with
$n=1$). If we measure at this point, we are bound to discover the
only nontarget in the database. To complete the search successfully,
choosing any of the other three entries will do. However, this
strategy can not be extended to similar scaled-up three out of four
cases. If there are more than one nontargets, we can determine and
rule out only one of them after the measurement. Choosing any of the
remaining entries does not necessarily yield a target anymore.

\section{Discussion}

We have rigorously shown that searching one out of four is the only
nontrivial case where the original Grover's algorithm is exact.  It
would be interesting to generalize the same kind of reasoning to the
generalized Grover's search with arbitrary phase shifts, in
particular the phase shifts of the form $r\pi$ with $r\in
\mathbb{Q}$, since they are easier to implement in practice. We
conjecture that a thorough analysis based on rationality
observations will provide us with similar results.

\begin{acknowledgments}

This work is supported by the Ohio University Research Committee
Award and Faculty Research Activity Fund.  The author also thanks
Peter H\o yer and Barry Sanders for helpful comments, and the
Institute for Quantum Information Science at University of Calgary
where part of the draft was prepared.

\end{acknowledgments}


\end{document}